\newtheorem{theorem}{Theorem}[section]
\newtheorem{corollary}[theorem]{Corollary}
\newtheorem{proposition}[theorem]{Proposition}
\newtheorem{claim}[theorem]{Claim}
\newtheorem{observation}[theorem]{Observation}
\def\squarebox#1{\hbox to #1{\hfill\vbox to #1{\vfill}}}
\newcommand{\qed}{\hspace*{\fill}
\vbox{\hrule\hbox{\vrule\squarebox{.667em}\vrule}\hrule}\smallskip}
\newenvironment{proof}{\noindent{\bf Proof:~~}}{\(\qed\)}
\newcommand{\vals}{s}
\newcommand{\valb}{b}
\newcommand{\ffs}{f_s}
\newcommand{\fb}{f_b}
\newcommand{\Fs}{F_s}
\newcommand{\Fb}{F_b}
\newcommand{\lowerint}{\underline{a}}
\newcommand{\upperint}{\overline{a}}
\begin{document}
\title{(Almost) Efficient Mechanisms for Bilateral Trading}
\author{Liad Blumrosen\thanks{A preliminary version of some of the results in this paper was presented in ACM EC 2014 in a paper titled "Reallocation Mechanisms".} \and Shahar Dobzinski}
\maketitle
\begin{abstract}
We study the bilateral trade problem: one seller, one buyer and a single, indivisible item for sale. It is well known that there is no fully-efficient and incentive compatible mechanism for this problem that maintains a balanced budget. 
We design simple and robust mechanisms that obtain approximate efficiency with these properties. We show that even minimal use of statistical data can yield good approximation results. Finally, we demonstrate how a mechanism for this simple bilateral-trade problem can be used as a ``black-box'' for constructing mechanisms in more general environments.
\end{abstract}


%


%
%
%

\section{Introduction}

In this paper we revisit the simple and well studied model of bilateral trade 
\cite{MS83}.
In the bilateral trade problem, a single seller is the owner of one indivisible item which can be traded with a single buyer.
Both the seller and the buyer have privately known values for consuming the item; These values are unknown to the market designer who only has access to their probability distributions.

The bilateral trade problem is probably the simplest form of two sided markets and supply chains. Two-sided markets, where multiple strategic sellers and buyers compete on each side, have always been an important form of trade. The Internet era emphasized the importance of understanding such markets with the emergence of large-scale online platforms like eBay and Amazon, and, more recently, with the popularity of the sharing economy and two sided platforms like Uber and Airbnb. Our paper raises new insights regarding the simple bilateral-trade settings, and we will argue that this understanding could help in the design of more general markets either directly or indirectly.

We will have three requirements from a mechanism for the bilateral trade problem. The first one is (ex-post) \emph{individual rationality}: the participation of the agents is voluntary and at any point they may leave the market and consume their initial endowments. The second requirement is (ex-post) \emph{budget balance}: the mechanism is not allowed to subsidize the agents or to make any profits. The third requirement is \textit{incentive compatibility} (IC). While most of the literature focuses on Bayes-Nash incentive compatibility, our mechanisms will possess the stronger (and thus more desired) property of dominant-strategy incentive compatibility (DSIC).

The seminal paper by Myerson and Satterthwaite \cite{MS83} analyzes the bilateral trade problem. Their celebrated impossibility result states that even in this simple setting, there is no Bayes-Nash incentive compatible mechanism which is fully efficient, individually rational and budget balanced. Myerson and Satterswaite go on and characterize the ``second best" mechanism, i.e., the mechanism that maximizes the expected total gains from trade subject to the individual rationality, budget balance, and Bayes-Nash incentive constraints. Notice that since maximizing the expected total gains from trade is equivalent to maximizing the expected improvement in efficiency, Myerson and Satterthwaite characterize in fact the most efficient mechanism subject to individual rationality, budget balance and incentive constraints.

Unfortunately, this ``second best" mechanism is not given as a closed-form formula, and it might be a non-trivial task to explicitly describe it and compute the expected efficiency even for simple distributions. Our main goal in this paper is to develop simple and practical mechanisms for the bilateral trade problem that will approximately maximize the expected efficiency. En route, we will obtain good bounds on the efficiency of Myerson and Satterthwaite's ``second best'' mechanisms.

\subsection{The Main Results}

Our work is inspired by a recent line of research that studies the power of simple mechanisms in comparison to the performance of the optimal, but complex, mechanisms (see, e.g., \cite{BNS06,HR09,DRY10,mcA92,mcA08} and references within). We restrict our attention to mechanisms that are easier to understand and implement. Specifically, we focus on mechanisms that simply post a take-it-or-leave-it price; A trade occurs only if both agents accept this price. In such mechanisms, the agents have obvious dominant strategies, and in the equilibrium analysis we do not need to speculate whether the agents compute the equilibrium correctly or converge to a particular Bayes-Nash equilibrium.

The simplicity of our mechanisms comes at the obvious price of sub-optimality, as the efficiency of our mechanisms might be inferior to that of the second-best mechanism. However, we are able to quantify this loss and show that our simple mechanisms perform quite well even with respect to a stronger benchmark: the optimal ex-post efficiency (i.e., ``first-best'' efficiency or simply the expected value of the maximum between the seller's and the buyer's value for the item).
We measure the quality of our mechanisms by the fraction of the optimal efficiency that they guarantee\footnote{Another possible objective may be approximating the expected gains from trade (GFT). As we have noted, a mechanism that maximizes gain-from-trade is fully efficient as well. However, from an approximation point of view,
the two objectives behave differently: indeed, any $c$ approximation to the GFT is also at least a $c$ approximation to the efficiency. However, the converse is false and in particular in Appendix \ref{app:gft-imposs} we show that no DSIC, budget balance and individual rational mechanism can guarantee a constant approximation to the optimal GFT. On the brighter side, McAfee \cite{mcA08} provides a mechanism that achieves a $1/2$ approximation to the optimal GFT when the median of the seller's distribution is no greater than the median of the buyer, and more recent approximations are in \cite{BM16,CGKLT17,BCWZ17}.}. Our main result is as follows:



\vspace{0.1in} \noindent \textbf{Theorem 1:} For every pair of distributions, there is a take-it-or-or-leave-it  price mechanism that achieves at least $1-\frac{1}{e}$ of the optimal efficiency in dominant strategies. The mechanism is obviously individually rational and budget balanced.

\vspace{0.1in}\noindent We stress that this is a worst-case bound over all possible distributions; typically there is a price that achieves a much higher fraction of the optimal welfare.
A recent impossibility result by \cite{KV19} shows that no dominant-strategy mechanism which is also individually rational and budget balanced can guarantee more than $0.7385$ of the optimal welfare (improving over a previous hardness result of 0.7485 that was provided by \cite{CKLT16}). Together with our $1-\frac{1}{e} \simeq 0.632$ bound this leaves a relatively small gap.\footnote{
In an earlier version of this paper \cite{BD14} we proved the existence of a price that guarantees $0.51$ of the optimal welfare. The follow-up \cite{CKLT16} improved this bound to $0.52$, and here we improve it further to $1-\frac{1}{e}$. Following our improvements, Kang and Vondr{\'a}k \cite{KV19} have further improved the approximation ratio to $1-\frac 1 e+0.0001$ by a much more complicated mechanism. As far as we know, we were the first to approximate efficiency for the general bilateral-trade problem.
}
One immediate corollary of the above theorem is a bound on the performance of the most efficient (second-best) mechanism with Bayes-Nash incentive compatibility. That is, it shows that this mechanism (that was characterized by Myerson and Satterthwaite in \cite{MS83}) achieves at least $1-\frac{1}{e}$ of the optimal efficiency even for the worst pair of distributions.

Given two distributions, one can find the price that maximizes the expected revenue simply by considering all possible values in the support and calculating the expected revenue when the price equals this point. However, while one cannot ignore the wealth of distributional information that retailers have at their disposal, this knowledge is often incomplete.
In many scenarios, statistical knowledge is not available to the designers (for example, in new markets or markets that involve new participants) or it may be costly. In other cases, the designer can only accurately estimate some statistics of the distributions (expectation, median, etc.) rather than the full details of the distribution.

Our next result shows that the $1-1/e$ approximation can be achieved by a randomized mechanism that chooses the prices according to a carefully chosen distribution. Importantly, this improved approximation only needs access to the distribution of the seller.

\vspace{0.1in} \noindent \textbf{Theorem 2:} For every distribution of the seller, there is a distribution $G$ of prices with the following property: if we post a take-it-or-leave-it price sampled from G then for \emph{every} distribution of the buyer we obtain a fraction of at least $(1-1/e)\approx 0.63$ of the optimal efficiency. 

\vspace{0.1in} \noindent Our main result, stated as Theorem 1, is actually a direct corollary of Theorem 2. Theorem 2 says that if prices are chosen at random according to some distribution, then the expected efficiency is at least $1-1/e$ of the optimal efficiency. It follows that for every distribution of the buyer, there must be one price that attains at least $1-1/e$ of the optimal efficiency as well. To explicitly find this price one needs to know the details of the distribution of the buyer as well.

We also present two simple mechanisms that achieve at least $1/2$ of the full efficiency but use very restricted distributional knowledge.\footnote{
At first glance, achieving 1/2 of the optimal efficiency looks trivial: we can allocate the item to the player with the higher expected efficiency and that should give us at least 1/2 of the optimal efficiency. However, this overly-optimistic argument fails since in bilateral trade the two agents are not symmetric: while it is easy to leave the item with the seller, we need to convince the seller to relinquish the item (via a sufficiently high payment) for the two agents to trade.
}
The first mechanism posts a price that is equal to the median value of the seller's distribution, and the other mechanism computes some sort of a weighted median based only on the distribution of the buyer and offers this price to the agents.
We show that the $1/2$ bound is tight in the sense that no single take-it-or-leave-it price can achieve a better approximation knowing only the distribution of the seller or only the distribution of the buyer.


\subsection{Using Solutions for Bilateral Trade as ``Black-Boxes"}

The above results study the fundamental problem of Bilateral Trade, and insights from the study of this basic problem should alleviate the design of mechanisms in more complex trade environments.\footnote{
In a companion paper \cite{BD14}, we show how ideas that stem from our understanding of the simple bilateral-trade setting can be used in the design of mechanisms in more general, multi dimensional exchange settings. In particular, we show in \cite{BD14} how a careful use of distribution quantiles as prices can help approximate the optimal efficiency; In prior-free settings, quantile-like prices are artificially created by random sampling methods.
}
In the second part of the paper, we show how results for the bilateral trade problem can be directly applied to construct approximately-optimal mechanisms in other settings in a black box manner.
More specifically, we show reductions\footnote{
A \emph{reduction} is a fundamental and widely used technique in theoretical computer science. This is essentially an algorithm that transforms one problem into another problem such that any solution to the latter problem implies a solution to the original problem as well.
} of the following form: given a mechanism that guarantees some $\alpha$ approximation to Bilateral Trade, we design a mechanism for the other problem that obtains $f(\alpha)$ approximation for some function $f$.
We assume that we are given a DSIC, ex-post budget balanced and individually-rational mechanism as input, and the mechanism we output maintains these desired properties.
Given an $\alpha$-approximation mechanism for Bilateral Trade, we give the following reductions:
\begin{itemize}
\item \emph{Partnership Dissolving.} Consider $n$ players, each player $i$ initially owns a fraction $r_i$ of a divisible good. Players have linear valuations, such that the value for player $i$ for a fraction $c$ of the item is $c \cdot v_i$. This is the classic model by Cramton, Gibbons and Klemperer \cite{CGK87}. We design a mechanism that achieves an $\alpha$ fraction of the optimal efficiency.
    This result proves that the extreme-ownership scenario of the partnership dissolving problem is actually the hardest to solve, and any approximation for the first implies the same approximation factor for the latter problem.
\item \emph{Divisible good with general monotone valuations.} Consider a seller and a buyer, where the seller initially owns a fully divisible good. The players have monotonically non-decreasing valuations. We design a mechanism that guarantees $\frac{\alpha}{\alpha+1}$ of the optimal efficiency in this setting.
\item \emph{Divisible good with convex valuations.} Consider a 2-player environment, where each player $i$ initially holds some fraction $r_i$ of a divisible good. The players have convex valuations, i.e., preferences with decreasing marginal valuations. We design a mechanism that guarantees $\frac{\alpha}{\alpha+1}$ of the optimal efficiency in this setting.
\end{itemize}

Applying our $(1-1/e)$-approximation mechanism mentioned above, we obtain a $0.39$-approximation mechanisms for the two 2-player trade problems with a divisible good, and a $(1-1/e)$-approximation mechanism to the $n$-player partnership dissolving problem. Any future improvements to the $1-1/e$ bound would immediately imply an improvement to the above bounds as well.

\paragraph*{More related research.}
The paper of McAfee \cite{mcA08} is closest in spirit to ours. McAfee shows that for distributions such that the median of the buyer is greater than the median of the seller, posting any price between these two medians achieves at least half of the optimal expected gains from trade. In a sense we show that this median technique can be developed to approximate the social welfare for every pair of distributions.

A recent paper by Garratt and Pycia \cite{GP15} shows that with non quasi-linear preferences, e.g., with risk aversion and wealth effects, efficient trade is possible under some general conditions on the information structure.
The partnership dissolving model was first studied by Cramton, Gibbons and Klemperer \cite{CGK87} who show that if the shares are close enough to equal holdings, there exists a fully efficient Bayes-Nash incentive compatible mechanism (see also \cite{mcA92}, and a survey \cite{Mol01}).

Our paper is part of two notable research directions from the last decade.
The first is a series of paper that show how simple, feasible mechanisms can approximate the results achieved by the optimal, yet often too complex, mechanisms (e.g., \cite{BNS06,HR09,DRY10,CHMS10,BILW14}). The second line of research studies ``robust" mechanism design, which is an effort to design mechanisms that are less sensitive to the fine details of the specific environment that may not be available to the planner in practice (see the survey \cite{BM13} and the many references within).

\section{Model}

The bilateral trade problem involves two agents, a seller and a buyer.
The seller owns an indivisible item, and his valuation for consuming this item is $\vals$. If a trade occurs and the item is allocated to the buyer, then the buyer enjoys a value of $\valb$. 

$\vals$ and $\valb$ are independently distributed according to distributions $\Fs$ and $\Fb$, respectively. All our results hold for any pair of distributions, but for the simplicity of presentation we assume the existence of density functions $\ffs$ and $\fb$, for the seller and the buyer respectively, which are always positive on a support $[\lowerint,\upperint]$ ($\lowerint\geq 0$) and atomless.


For every realization of $\vals$ and $\valb$, the deterministic mechanism takes $\vals,\valb$ as input and determines the allocation $X(\vals,\valb)\in \{0,1\}$, the payment $p_s$ that the seller receives, and the payment $p_b$ of the buyer. We set $X(\vals,\valb)=0$ in the case of no trade (the seller keeps the item) and $X(\vals,\valb)=1$ if the buyer receives the good.

In this paper we restrict our attention to ex-post budget balanced mechanisms in which for all $\vals,\valb$ we have that $p_{\vals}=p_{\valb}$.\footnote{
This requirement is sometimes known as \emph{strong} budget balance. Weak budget balance allows the mechanism to accumulate profit. We are able to prove our positive results even with the strong budget balance property.
} We can therefore denote the trade price by $p$. We also consider only ex-post individually rational mechanisms, in which $p=0$ if there is no trade and $\vals\leq  p\leq \valb$ otherwise.

The private information of the buyer is $\valb$ and the private information of the seller is $\vals$. When trade occurs at price $p$, the seller's utility is $p$ and the buyer's utility is $\valb-p$. With no trade, the seller's utility is $\vals$ and the buyer's utility is $0$. 

Our mechanisms are dominant strategy incentive compatible (DSIC). In particular, we consider posted price mechanisms, in which the center posts a trade price $p$ that is independent on the realizations of the valuations and trade occurs only if both players agree to that price. Posted price mechanisms are clearly DSIC, budget balanced and individually rational. In fact, since\footnote{One can also relax our requirements and consider mechanisms with $X(\vals,\valb)\in [0,1]$ with all definitions extended accordingly. However, Hagerty and Rogerson \cite{HR87} essentially show that every optimal mechanism is a posted price mechanism. Other papers (e.g., \cite{SZ16,DK15}) show the optimality of posted price mechanisms in related settings.} $X(\vals,\valb)\in \{0,1\}$ it is not hard to see that every DSIC, budget balanced and individually rational mechanism is a posted price mechanism, simply because $p$ cannot depend on $\valb$ nor on $\vals$.

A fully efficient mechanism will initiate a trade whenever $\vals>\valb$. Define the optimal efficiency as
\begin{align*}
E_{ \vals \sim \Fs, \valb \sim \Fb }[\max\{ s,b\}]
\end{align*}
We measure the performance of our mechanisms by the fraction of the optimal efficiency that they obtain. Specifically, we denote the expected efficiency achieved by a mechanism with an allocation function $X$ as
\begin{align*}
E_{ \vals \sim \Fs, \valb \sim \Fb }[ (1-X(\vals,\valb))\cdot \vals + X(\vals,\valb) \cdot \valb]
\end{align*}



\section{Warm Up: Simple Mechanisms for Bilateral Trade}
\label{sec:warmup}

We begin our investigations by introducing two simple mechanisms that
guarantee at least half of the optimal efficiency.
In the next sections we provide an improved approximation ratio that relies on ideas that are presented here in their most crystallized form.

The mechanisms that are developed in this section -- as well as all of our mechanisms for the bilateral trade problem -- simply set a price $p$. The item is traded if and only if $\valb\geq p$ \emph{and} $\vals\leq p$. In case of trade, the buyer pays $p$ and the seller receives the same amount. Since the choice of $p$ will depend only on the the distributions of the players and not on the actual realizations, this family obviously yields dominant strategy mechanisms that are individually rational and budget balanced.

Our first mechanism is the \emph{median mechanism}\footnote{As discussed in the introduction, this mechanism was has shown to approximate the GFT in some settings by \cite{mcA08}.} sets the price $p$ to be the median of seller's distribution, i.e., the point $M_s$ such that $\Fb(M_s)=\frac 1 2$.

\begin{theorem}
The median mechanism is DSIC, individually rational, budget balanced and always achieves at least $1/2$ of the optimal social welfare.
\end{theorem}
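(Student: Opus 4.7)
The plan is to set the price $p = M_s$ and establish $2\,\text{MECH} \ge \text{OPT}$ via a pointwise-then-expectation argument. I would first decompose $\text{MECH} = E[s] + E[(b-s)\,\mathbf{1}[s \le p,\, b \ge p]]$, since the mechanism earns $s$ by default and gains $b - s$ exactly on the trade event. Individual rationality, budget balance, and DSIC come for free because the price is a constant independent of the reports.

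For the welfare upper bound, the key pointwise inequality I would establish by a short case analysis on the signs of $s-p$ and $b-p$ is
\[
\max(s,b) \le s + (b-p)^+ + (p-s)^+,
\]
or equivalently $\max(s,b) \le \max(s,p) + (b-p)^+$. Taking expectations bounds $\text{OPT}$ in terms of the same two truncated quantities that arise in the mechanism. Indeed, on the trade event one has exactly $b - s = (b-p) + (p-s)$ with both summands nonnegative, and by independence of $s$ and $b$ together with $\Pr[s \le p] = \tfrac{1}{2}$ one obtains
\[
\text{MECH} - E[s] = \tfrac{1}{2}\,E[(b-p)^+] + \Pr[b \ge p]\cdot E[(p-s)^+].
\]

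Subtracting yields $2\,\text{MECH} - \text{OPT} \ge E[s] + (2\Pr[b \ge p] - 1)\,E[(p-s)^+]$. When $\Pr[b \ge p] \ge 1/2$ this is manifestly nonnegative. Otherwise the coefficient of $E[(p-s)^+]$ lies in $(-1, 0]$, and I would invoke two simple consequences of $p$ being the \emph{seller's} median: $E[s] \ge p \cdot \Pr[s \ge p] = p/2$, and $E[(p-s)^+] \le p \cdot \Pr[s \le p] = p/2$. These together give $E[s] \ge (1 - 2\Pr[b \ge p])\,E[(p-s)^+]$, which closes the bound. The main obstacle, I anticipate, is spotting the pointwise $\max$ inequality and recognizing that the \emph{same} two truncated terms appear on both sides with compatible coefficients thanks to $p$ being the median --- once that alignment is in view, the rest is routine symbolic manipulation.
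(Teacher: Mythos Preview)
Your argument is correct and constitutes a genuinely different proof from the one in the paper. The paper conditions on the realized buyer value $b$ and splits into the two cases $b\ge M_s$ and $b<M_s$, bounding the ratio $\text{OPT}/\text{MECH}$ by $2$ in each via direct manipulation of conditional expectations of the form $E[\vals\mid \vals>\valb]$ and $E[\vals\mid \vals>M_s]$. You instead establish the pointwise bound $\max(\vals,\valb)\le \vals+(\valb-p)^{+}+(p-\vals)^{+}$, take a single expectation, and then use independence together with $\Pr[\vals\le p]=\tfrac12$ to express $\text{MECH}-E[\vals]$ in the \emph{same} two truncated moments. The residual $E[\vals]+(2\Pr[\valb\ge p]-1)E[(p-\vals)^{+}]$ is then handled by the two clean consequences of $p$ being the seller's median, $E[\vals]\ge p/2$ and $E[(p-\vals)^{+}]\le p/2$.

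Your route is more structural: it makes transparent exactly where the median property enters (only in the final inequality $E[\vals]\ge E[(p-\vals)^{+}]$), and it avoids any case split on $\valb$. The paper's conditioning approach, by contrast, is somewhat heavier but has the advantage that it ports directly to the weighted-median mechanism in the next theorem, where the price is chosen from the \emph{buyer's} distribution and one conditions on $\vals$ instead; your pointwise decomposition would need to be re-derived there with the roles swapped.
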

\begin{proof}
By the discussion above, the mechanism is obviously DSIC, individually rational, and budget balanced. We now analyze its approximation ratio. We start with some notation. Since the trade price depends only on the distribution of the seller, to analyze the approximation ratio we may fix the value $\valb$ of the buyer and show that for every fixed $\valb$ the expected approximation ratio is $1/2$ in expectation over the seller's distribution. We divide our analysis to two disjoint cases. In both cases we use the observation that the expected optimal welfare is at most $\Pr[\valb\geq \vals]\cdot \valb+\Pr[\vals>\valb]\cdot E[\vals|\vals>\valb]$.

\begin{enumerate}
\item $\valb\geq M_s$. In this case the item is sold with probability $\frac 1 2$, when $\vals\leq M_s$. We get that the approximation ratio of the mechanism (i.e., the expected optimal efficiency divided by the expected efficiency of the median mechanism) is at most:
\begin{align*}
&\frac {\Pr[\valb> \vals]\cdot \valb+\Pr[\vals\geq\valb]\cdot E[\vals|\vals\geq \valb]} {\Pr[\vals<M_s] \cdot \valb+ \Pr[\vals\geq  M_s] \cdot E[\vals|\vals\geq M_s]} \\&=
 \frac {\Pr[\valb> \vals]\cdot \valb+\Pr[\vals\geq\valb]\cdot E[\vals|\vals\geq\valb]} {\Pr[\vals<M_s] \cdot  \valb+ \Pr[\vals\geq \valb] \cdot E[\vals|\vals\geq \valb] + \Pr[M_s< \vals< \valb] \cdot E[\vals|M_s< \vals < \valb]} \\&\leq
 \frac {\Pr[\valb\geq \vals]\cdot \valb+\Pr[\vals\geq\valb]\cdot E[\vals|\vals\geq \valb]} {\frac 1 2 \cdot  \valb+ \Pr[\vals\geq\valb]\cdot E[\vals|\vals\geq \valb] } \leq 2
\end{align*}

\item $\valb< M_s$. Here the item is never sold and the expected welfare is $E[\vals]$. We bound the expected approximation ratio:
\begin{align*}
\frac {\Pr[\valb\geq \vals]\cdot \valb+\Pr[\vals>\valb]\cdot E[\vals|\vals>\valb]} {E[\vals]} &\leq \frac {\frac 1 2 M_s+\Pr[\vals>\valb]\cdot E[\vals|\vals>\valb]} {\Pr[\vals\leq \valb]\cdot E[\vals|\vals\leq \valb]+\Pr[\vals>\valb]\cdot E[\vals|\vals>\valb]}\\
&\leq \frac {\frac 1 2 M_s+\Pr[\vals>\valb]\cdot E[\vals|\vals>\valb]} { \Pr[\vals>\valb]\cdot E[\vals|\vals>\valb]}\\
&\leq \frac {\frac 1 2 M_s+\frac 1 2 M_s} {\frac 1 2 M_s} = 2\\
\end{align*}
where in the last transition we use the fact that $\Pr[\vals>\valb]\cdot E[\vals|\vals>\valb]=\Pr[M_s\geq \vals>\valb]\cdot E[\vals|M_s\geq \vals>\valb]+\Pr[\vals>M_s]\cdot E[\vals|\vals>M_s]\geq \frac 1 2 M_s$.
\end{enumerate}
\end{proof}

One could hope that the ``symmetric'' mechanism that sets the trade price to be $M_b$ (the median of the distribution of the buyer) will provide a good approximation ratio as well. Unfortunately, this is not the case. To see this, consider the distribution where the buyer's value is $\epsilon$ with probability $\frac 1 2 + \epsilon$ and $t>1$ with probability $\frac 1 2 -\epsilon$. Let the seller's value be $1$ with probability $1$. When we set the trade price to be $M_b=1$ the item is never sold since the price is always too low for the seller. The welfare of the mechanism is $1$ while ideally we would like to sell the the item whenever the buyer's value is $t$ and get an expected approximation ratio of $(\frac 1 2 -\epsilon)t$. Letting $t$ approach infinity and $\epsilon$ approach $0$, we get that setting the trade price to be $M_b$ does not guarantee any constant approximation ratio.

However, we do show that a more careful choice of a trade price $p$ (using only the distribution of the buyer) results in an approximation ratio of $1/2$. The \emph{weighted median mechanism} sets the trade price $p$ to be the point $W_b$ for which $\Fb(W_b)\cdot E[\valb|\valb<W_b] = (1-\Fb(W_b))\cdot E[\valb|\valb\geq W_b]$. In particular, since $E[\valb|\valb \geq W_b] \geq E[\valb|\valb<W_b]$ we get that $\Pr[\valb>W_b]\leq \frac 1 2$. In addition:
\begin{align}\label{eqn-weighted-median}
(1-\Fb(W_b))\cdot E[\valb|\valb\geq W_b] =\Fb(W_b)\cdot E[\valb|\valb<W_b] = \frac {E[\valb]} 2
\end{align}

\begin{theorem}
The weighted median mechanism is DSIC, individually rational, budget balanced and always achieves at least  $1/2$ of the optimal social welfare.
\end{theorem}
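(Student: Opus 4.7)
The mechanism being a fixed-price posting, DSIC, individual rationality, and budget balance are immediate, so the whole task is to bound the welfare. My plan is the same "fix the seller's value" technique used in the median mechanism proof: I will show that for every fixed $s$, the expected welfare over $b$ achieved by posting $W_b$ is at least half of $E_b[\max\{s,b\}]$. Since the price $W_b$ depends only on $F_b$, conditioning on $s$ is harmless.

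Before splitting into cases, I will extract two structural facts about $W_b$ from the defining identity $\Fb(W_b)\cdot E[\valb|\valb<W_b] = (1-\Fb(W_b))\cdot E[\valb|\valb\geq W_b] = E[\valb]/2$. First, since $E[\valb|\valb<W_b] < W_b \leq E[\valb|\valb\geq W_b]$, the equality of the two products forces $\Fb(W_b) \geq 1-\Fb(W_b)$, i.e.\ $\Fb(W_b) \geq 1/2$. Second, from $\Fb(W_b)\cdot E[\valb|\valb<W_b] = E[\valb]/2$ together with $E[\valb|\valb<W_b] < W_b$ and $\Fb(W_b)\leq 1$, I get $W_b \geq E[\valb]/2$. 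These two inequalities are the heart of why $W_b$ is the right price; I expect the rest of the argument to be essentially bookkeeping.

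Case $\vals > W_b$: no trade occurs, so the mechanism's welfare is exactly $\vals$. For the benchmark I will use
\begin{align*}
E_b[\max\{\vals,\valb\}] \;\leq\; \vals + E[\valb\cdot \mathbf{1}(\valb>\vals)].
\end{align*}
Since $E[\valb\cdot \mathbf{1}(\valb>x)]$ is non-increasing in $x$, the tail expectation at $x=\vals$ is bounded by its value at $x=W_b$, which equals $(1-\Fb(W_b))E[\valb|\valb\geq W_b]=E[\valb]/2$. Combined with $E[\valb]/2 \leq W_b < \vals$ this yields $E_b[\max\{\vals,\valb\}] \leq \vals + \vals = 2\vals = 2\cdot \text{MECH}$.

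Case $\vals \leq W_b$: the mechanism's expected welfare over $b$ is
\begin{align*}
\Fb(W_b)\cdot \vals \;+\; (1-\Fb(W_b))\cdot E[\valb|\valb\geq W_b] \;=\; \Fb(W_b)\cdot \vals + \tfrac{1}{2}E[\valb].
\end{align*}
For the benchmark I bound $E_b[\max\{\vals,\valb\}] \leq \vals + E[\valb\cdot \mathbf{1}(\valb>\vals)] \leq \vals + E[\valb]$. Using $\Fb(W_b) \geq 1/2$, twice the mechanism's welfare is $2\Fb(W_b)\vals + E[\valb] \geq \vals + E[\valb]$, which dominates the benchmark. Integrating over $\vals$ then gives $\mathrm{OPT}\leq 2\cdot \mathrm{MECH}$ unconditionally.

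The only non-routine step is the derivation of the two inequalities $\Fb(W_b)\geq 1/2$ and $W_b\geq E[\valb]/2$ from the defining identity; everything else is a straightforward case split and the trivial bound $E[\valb\cdot \mathbf{1}(\valb>\vals)]\leq E[\valb\cdot \mathbf{1}(\valb>W_b)] = E[\valb]/2$ in the first case.
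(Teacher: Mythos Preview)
Your proof is correct and follows essentially the same approach as the paper's: fix $s$, split into the two cases $s\lessgtr W_b$, and invoke the two structural facts $\Fb(W_b)\geq 1/2$ and $W_b\geq E[\valb]/2$ together with $(1-\Fb(W_b))E[\valb\mid\valb\geq W_b]=E[\valb]/2$. The only cosmetic differences are that you compare $2\cdot\mathrm{MECH}$ to $\mathrm{OPT}$ rather than bounding the ratio, and you use the inequality $\max\{s,\valb\}\leq s+\valb\,\mathbf{1}(\valb>s)$ in place of the exact decomposition; the inequalities being applied are identical.
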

\begin{proof}
Again, we only analyze the approximation ratio of the mechanism. Similarly to the analysis of the median mechanism, since the trade price $p$ depends only on the buyer's distribution we may analyze the approximation ratio assuming that $\vals$ is fixed. We divide our analysis into two cases:
\begin{enumerate}
\item $\vals<W_b$. In this case the item is sold whenever the buyer's value is at least $W_b$. The approximation ratio can be bounded by:
\begin{align*}
\frac {\Pr[\valb\geq \vals]\cdot E[\valb|\valb\geq \vals]+\Pr[\vals>\valb]\cdot \vals} {\Pr[\valb\geq W_b] \cdot E[\valb|\valb\geq W_b]+\Pr[\valb<W_b] \cdot \vals}
\leq \frac {E[\valb]+\vals} {\frac {E[\valb]} {2}+\frac {\vals} 2}\leq 2
\end{align*}
where we use Equation (\ref{eqn-weighted-median}) and $\Pr[\valb<W_b]\geq \frac 1 2$.

\item $\vals\geq W_b$. In this case the item is never sold. We will make use of $E[\valb]=\Pr[\valb>W_b]\cdot E[\valb|\valb>W_b]+\Pr[\valb\leq W_b]\cdot E[\valb|\valb\leq W_b]\leq \frac 1 2 \cdot E[\valb] + W_b$ which implies that $\frac {E[\valb]} 2\leq W_b\leq \vals$. Therefore:
\begin{align*}
\frac {\Pr[\valb\geq \vals]\cdot E[\valb|\valb\geq \vals]+\Pr[\vals>\valb]\cdot \vals} {\vals} &\leq \frac {\Pr[\valb\geq W_b]\cdot E[\valb|\valb\geq W_b]+\Pr[\vals>\valb]\cdot \vals} {\vals} \\ &\leq \frac {\frac {E[\valb]} 2+\vals} {\vals} \leq \frac {\vals+\vals} {\vals} = 2
\end{align*}
where we again use Equation (\ref{eqn-weighted-median}).
\end{enumerate}
\end{proof}

The two simple mechanisms that we presented in this section prove that one can achieve good approximation to the optimal efficiency while preserving budget balance and individual rationality. Moreover, this can be done having statistical knowledge only on one side of the market. 
It turns out that this is the best ratio that can be achieved via deterministic 
DSIC mechanisms (proof is in Appendix \ref{app:subsec:median-is-tight}).

\begin{proposition}\label{prop-bilateral-distribution-info}
No deterministic DSIC, individually rational and budget balanced mechanism can obtain an approximation ratio better than $1/2$ if the mechanism uses only the distribution of the seller or only the distribution of the buyer.
\end{proposition}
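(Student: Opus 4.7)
The plan begins from the observation (already noted in the paper) that every deterministic, DSIC, individually-rational, budget-balanced bilateral-trade mechanism is a posted-price mechanism: it commits to some $p$, and trade happens iff $\vals \le p \le \valb$. So the proposition reduces to showing that for every mapping $p = p(\Fs)$ (respectively every mapping $p = p(\Fb)$), there exists a pair $(\Fs, \Fb)$ on which the approximation ratio is at most $1/2 + o(1)$.

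\textit{Seller-only case.} I would take $\Fs$ approximating the two-point distribution with mass $\tfrac12$ at a tiny $a > 0$ and mass $\tfrac12$ at $1$. For the mechanism's chosen $p = p(\Fs)$, the adversary plays: (i) if $p < a$, set $\Fb$ near a large $B$, so trade is impossible, mechanism welfare is $\approx (a+1)/2$ but optimum is $\approx B$, ratio $\to 0$; (ii) if $a \le p < 1$, again $\Fb$ near $B$, so trade occurs only on the low-$\vals$ branch (probability $\tfrac12$), welfare $\approx (B+1)/2$ versus optimum $\approx B$, ratio $\to \tfrac12$; (iii) if $p \ge 1$, set $\Fb$ just below $1$ so the buyer never accepts, welfare stays at $E[\vals] \approx (a+1)/2$ but optimum is $\approx 1$, ratio $\to \tfrac12$ as $a \to 0$. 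Letting $a \to 0$ and $B \to \infty$ drives the ratio to $1/2$ regardless of the mechanism's $p$.

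\textit{Buyer-only case.} The dual construction is asymmetric (the obvious symmetric choice $\Fb = (0,\tfrac12;B,\tfrac12)$ fails, since the mechanism can pick $p$ just below $B$ and achieve ratio close to $1$). For large $B$, take $\Fb$ approximating the distribution with mass $1 - 1/B$ at $1$ and mass $1/B$ at $B$, so $E[\valb] = 2 - 1/B$. For $p = p(\Fb)$: (i) if $p \le 1$, set $\Fs$ just above $p$, so no trade and welfare $\approx p$ versus optimum $\approx 2$; the tightest sub-case $p=1$ gives ratio $(1+\delta)/(2+\delta) \to \tfrac12$ as $\delta \to 0$; (ii) if $1 < p \le B$, set $\Fs$ near $0$, so only the upper branch $\valb = B$ produces trade (probability $1/B$), welfare $\approx (1/B)\cdot B = 1$ versus optimum $\approx 2$, ratio $\to \tfrac12$; (iii) if $p > B$, no trade ever and ratio $\to 0$. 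Again $B \to \infty$ pushes the worst-case ratio to $1/2$.

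The main obstacle is the smoothing from these atomic-looking constructions into densities that are strictly positive on the common support $[\lowerint,\upperint]$ required by the model. I would replace each atom with a narrow density bump and add a vanishing uniform base density so that positivity is maintained. All trade probabilities and conditional expectations involved are continuous functionals of the input distributions, so the estimates above hold up to $o(1)$ corrections that disappear as the bump widths and base mass tend to zero (after first sending $B \to \infty$ and $a \to 0$). A minor point worth noting is that the adversary's response distribution depends on the mechanism's chosen $p$; this is permitted since the proposition only asks for the existence of a single bad pair $(\Fs,\Fb)$ per mechanism.
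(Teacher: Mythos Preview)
Your proposal is correct and follows essentially the same approach as the paper: a two-point seller distribution near $0$ and $1$ for the seller-only case, and a two-point buyer distribution with mass $1-1/B$ near $1$ and mass $1/B$ near $B$ for the buyer-only case (the paper uses the concrete instance $B=100$, probabilities $0.99/0.01$), followed by the same case split on where the posted price $p$ falls. Your treatment is slightly more careful in separating out the degenerate sub-cases ($p<a$, $p>B$) and in explicitly discussing the smoothing to strictly positive densities, but the constructions and the arithmetic are the same.
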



\section{A (1-1/e)-Approximation Mechanism}

In Section \ref{sec:warmup} we showed that one can compute a price that depends only on the distribution of the seller and this price will guarantee at least half of the optimal efficiency no matter what the distribution of the buyer is. We showed that this bound of $\frac{1}{2}$ is tight for all deterministic dominant-strategy incentive-compatible mechanisms. However, this bound could be improved by randomly choosing a price.
We will describe a mechanism that posts a price at random, and we will only require this mechanism to have access to the distribution of the seller. We will then conclude the existence of a deterministic posted-price mechanism that has access to both distributions.

\vspace{2mm}

We let $q(\cdot)$ denote the quantile function of the seller, that is, $\Fs(q(x))=x$ for every $x \in [\lowerint,\upperint]$.
Consider the following mechanism that posts a price to both players, but chooses the price randomly. We call this mechanism the  \emph{Random-Quantile mechanism}:
\begin{itemize}
\item \textit{Choose a number $x \in [1/e,1]$ according to the distribution with CDF $G(x)=\ln(e\cdot x)$.}
\item \textit{Set the price to be $q(x)$.}
\end{itemize}

\begin{theorem}
The Random-Quantile mechanism is DSIC, individually rational, budget balanced, and
achieves in expectation at least $1-\frac{1}{e}$ fraction of the optimal efficiency for every pair of distributions $\Fs$ and $\Fb$.
\end{theorem}

Before providing the proof of the theorem, we note that our Random-Quantile mechanism was shown by \cite{KV19} to provide the best approximation ratio among all ``quantile'' mechanisms: mechanisms that choose $x$ according to some distribution, and then set the price to be $q(x)$.

\begin{proof}
We prove the theorem for every fixed value $\valb$ of the buyer and it will clearly hold for every distribution of the buyer as well.

We will first prove that if we truncate the seller's distribution at $b$, the approximation ratio can only get worse.

\begin{claim}
Let the buyer have a fixed value $\valb$,
and let $\Fs^*$ be a distribution for the seller such that $\Fs^*(x)=\Fs(x)$ for $x<b$, $\Fs^*(x)=1$ for $x\geq b$.
We denote by $RAND(F_s,b)$  the expected efficiency of the Random-Quantile mechanism for the distribution $\Fs$ for the seller and a fixed value $\valb$ for the buyer. We denote by $OPT(\Fs,b)$ the optimal efficiency for these distributions.
Then,
\begin{align*}
\frac{RAND(\Fs,b)}{OPT(\Fs,b)} \geq \frac{RAND(\Fs^*,b)}{OPT(\Fs^*,b)}
\end{align*}
%


\end{claim}
\begin{proof}
We first note that
\begin{align}
RAND(\Fs,\valb)-RAND(\Fs^*,\valb) & =
(1-\Fs(\valb))\Big( E[\vals|\vals \geq \valb]-\valb \Big)\\
&= OPT(\Fs,\valb)-OPT(\Fs^*,\valb)
\end{align}
To see this, note that that for all values of $\vals$ that are below $b$, the expected efficiency is the same for $\Fs^*$ and $\Fs$.
For values $\vals$ that are above $b$, both the optimal allocation and the Random-Quantile mechanism gain efficiency of $\vals$ for $\Fs$ but
of only $\valb$ for $\Fs^*$.
Thus, the difference between the expected efficiency in the Random-Quantile mechanism and in the optimal efficiency is exactly $(1-\Fs(\valb))\Big( E[\vals|\vals \geq \valb]-\valb \Big)$.


Therefore, moving from $\frac{RAND(\Fs,b)}{OPT(\Fs,b)}$ to
$\frac{RAND(\Fs^*,b)}{OPT(\Fs^*,b)}$ the enumerator and the denominator decrease by the same additive term. The claim follows as the two fractions are at most $1$.
%
\end{proof}

We now analyze the approximation ratio assuming a fixed buyer valuation $b$ and the seller distribution $\Fs^*$, for which $OPT=b$. This will also imply the same approximation bound for the original distributions by the above claim. Let 
$y=\Fs(b)$ denote the probability that the seller's value is below $\valb$ for $\Fs^*$.

In our randomized mechanism, either the seller keeps the item or the buyer gets it. We will separately bound the expected contribution of the seller's value and the buyer's value in the mechanism and by the linearity of expectation get a bound on the overall efficiency.
Recall that the seller accepts a price $q(x)$ with probability $x$ by definition and that the density of the price distribution $G(x)$ is $\frac{1}{x}$.
When the mechanism chooses a quantile $x$ between $\frac{1}{e}$ and $y$, a trade occurs with probability $x$, and the realized efficiency is $\valb$.
Therefore, the contribution of the buyer to the overall efficiency is at least $\int_{\frac{1}{e}}^{y} x \cdot b \cdot \frac{1}{x} dx$.
On the other hand, note that unless the seller's value is strictly below $b$, the efficiency is $\valb$ with probability $1$.
The seller has value $b$ with probability $1-y$, and therefore the contribution of the seller's value to the overall efficiency is at least $b\cdot (1-y)$.
We have that

%


\begin{align}
RAND(\Fs^*,b) &\geq \int_{\frac{1}{e}}^{y} x \cdot b \cdot \frac{1}{x} dx + b\cdot (1-y) \\
&= \left(y-\frac{1}{e}\right)\cdot b + b\cdot (1-y)\\
& = (1-\frac{1}{e})\cdot b
\end{align}

\end{proof}

This shows that by knowing only the seller's distribution, one can build a price distribution that guarantees a $1-1/e$ fraction of the optimal efficiency regardless of the distribution of the buyer. An immediate corollary of this result is that for every distribution of the buyer there exists at least one price that achieves at least the same fraction of the optimal efficiency (as there always is a price in the support that the expected revenue it guarantees is higher than the expected revenue with respect to the distribution).

\begin{corollary}
For every pair of distributions $\Fs,\Fb$, there exists a deterministic take-it-or-leave-it price that achieves at least $1-1/e$ of the optimal efficiency. This mechanism is dominant-strategy incentive compatible, 
budget balanced and (ex-post) individually rational.
\end{corollary}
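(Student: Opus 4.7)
The plan is to deduce this corollary from the preceding theorem by a standard averaging (probabilistic-method) argument. The Random-Quantile mechanism first samples a quantile $x$ from the distribution $G$ on $[1/e,1]$ and then deterministically posts the fixed price $q(x)$. Denote by $D(p)$ the expected efficiency of the deterministic posted-price mechanism that always charges price $p$ under the given pair $(\Fs,\Fb)$. Then, by construction and linearity of expectation,
\[
E_{x\sim G}\bigl[D(q(x))\bigr] \;=\; RAND(\Fs,\Fb) \;\geq\; (1-1/e)\cdot OPT(\Fs,\Fb),
\]
where the inequality is exactly the content of the preceding theorem (applied to the actual buyer distribution $\Fb$, integrating the per-$b$ bound proved there).

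Since a random variable must attain at least its expectation with positive probability, there exists some $x^*\in[1/e,1]$ for which $D(q(x^*))\geq (1-1/e)\cdot OPT(\Fs,\Fb)$. I would then take $p^*:=q(x^*)$ as the desired deterministic take-it-or-leave-it price. The resulting single-price mechanism inherits dominant-strategy incentive compatibility, ex-post (strong) budget balance, and ex-post individual rationality from the general reasoning laid out at the start of Section~\ref{sec:warmup}: when the price is fixed independently of the reports, accepting iff the price is favorable is dominant, the buyer's payment equals the seller's receipt, and trade only happens when the price lies between the two valuations.

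There is essentially no technical obstacle here; the argument is just a one-line derandomization. The only point worth flagging is that the resulting existence proof is non-constructive with respect to $\Fb$: the price distribution $G$ is built using $\Fs$ alone, but identifying a specific good realization $p^*$ for a particular buyer distribution requires knowledge of $\Fb$ as well. This asymmetry is exactly the trade-off noted in the discussion preceding the corollary, and it is the reason the statement is phrased existentially rather than as an explicit construction.
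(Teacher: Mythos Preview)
Your proposal is correct and essentially identical to the paper's own argument: the paper simply notes that since the Random-Quantile mechanism achieves $1-1/e$ of the optimal efficiency in expectation over the price distribution, there must exist at least one price realization that does at least as well, and that identifying this price requires knowing $\Fb$. Your write-up just spells out this averaging step a bit more explicitly.
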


The above proposition is non-constructive: to compute such a price one needs to have access to the distribution of the buyer as well. This result was presented informally in the introduction as Theorem 1.

\section{Black-Box Reductions}

The Bilateral Trade problem is one of the most fundamental and well-studied problems in mechanism design. So far, we designed mechanisms that approximate the optimal social welfare for this problem. In this section, we show that the simple Bilateral Trade environment can be used as a building block for the construction of mechanisms in more general environments. We design the mechanisms using black-box reductions: we show that given any $\alpha$ approximation mechanism for Bilateral Trade we can construct a mechanism with an approximation ratio which is some function of this $\alpha$ that maintains all the desired economic properties.

In the rest of this section we will consider a more general model that reallocates a fully divisible good.
In Section \ref{subsec:partership-reduction} we study the $n$-player partnership dissolving setting, where players initially own fractions of the good and have linear utilities. In Section \ref{subsec:reductions:monotone} we study 2-player settings with general monotone valuations, and in Section
\ref{subsec:reductions:convex} we discuss a similar 2-player setting with the restriction to convex valuations (but with an arbitrary initial allocation).


\subsection{Dissolving Partnerships}
\label{subsec:partership-reduction}

In the \emph{partnership dissolving} problem, there are $n$ agents, each agent $i$ owns a share $r_i$ of an asset, and $\sum_{i=1}^n r_i =1$. Each agent $i$ has a value $v_i$ for holding the entire asset, or a value $c\cdot v_i$ for holding a fraction $c\geq 0$ of the asset. Let $r_{max}=\max\{r_1,...,r_n\}$ be the largest share held by an agent.


We show that any approximation mechanism for bilateral trade can be used for constructing a mechanism for partnership dissolving with exactly the same guarantee on the approximation ratio.
This reduction from partnership dissolving to bilateral trade holds for all distributions of the buyers, and irrespectively of the size of the initial shares.
The idea is that each agent sells his share to the other agents
via a second-price auction, where the price taken from the bilateral trade
mechanisms for this setting serves as a reserve price.
We show that for each player, and thus for the whole economy, this mechanism can only improve the efficiency of the bilateral trade mechanism.




\begin{theorem}
\label{thm:reduction-partnership}
Let $M$ be some DSIC, individually rational, and budget balanced mechanism for bilateral trade that achieves an $\alpha$-approximation to the welfare. There is a DSIC, individually-rational and budget-balanced mechanism for partnership dissolving which also achieves an $\alpha$-approximation to the optimal efficiency.
\end{theorem}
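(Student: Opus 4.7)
The plan is to construct a mechanism that runs $n$ independent sub-auctions in parallel, one per share. For each agent $i$, view the share $r_i$ as being sold by agent $i$ (the seller) to a virtual buyer whose value is $\max_{j\neq i} v_j$. The distribution $F_i$ of $v_i$ and the distribution $G_{-i}$ of $\max_{j\neq i} v_j$ are computable from the commonly known $F_1,\dots,F_n$, so I can apply $M$ to the pair $(F_i,G_{-i})$ to obtain a reserve price $p_i$. I implement the sub-auction for $r_i$ as a Vickrey auction with reserve $p_i$ among the agents $j\neq i$: if $v_i\le p_i$ and some $v_j\ge p_i$ for $j\ne i$, the highest-bidding $j^{*}$ wins $r_i$ and pays $r_i\cdot\max(p_i,\text{second-highest among }\{v_j\}_{j\neq i})$, which is transferred to $i$; otherwise $i$ keeps $r_i$. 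Note $p_i$ is computed from distributions only, never from reports.

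Verifying the economic properties is then direct. For each $i\neq j$, the sub-auction for $r_j$ is a Vickrey auction with reserve, in which $i$'s bid only affects whether $i$ wins that share, so truthful bidding is dominant there. For the sub-auction for $r_i$, the threshold $p_i$ is independent of $i$'s report and the payment received (when trading) is at least $p_i\cdot r_i\ge v_i\cdot r_i$, so the seller prefers trading iff $v_i\le p_i$, i.e., truthful reporting relative to the threshold. Because utilities are additively linear across shares and there are no allocation constraints coupling sub-auctions, the incentive analysis decouples and truthful reporting is globally dominant. Individual rationality follows sub-auction by sub-auction (Vickrey winner pays at most his bid per unit, seller's received payment is at least his value), and strong ex-post budget balance holds because within each sub-auction the winner's payment equals exactly the seller's receipt.

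For the efficiency analysis, the optimal allocation gives the entire asset to the highest-valued agent, so
\[
OPT \;=\; E[\max_i v_i] \;=\; \sum_{i} r_i \cdot E[\max_j v_j],
\]
using $\sum_i r_i =1$. Now fix $i$ and consider the expected welfare contributed by share $r_i$ in my mechanism: either $r_i$ stays with $i$ (contributing $r_i v_i$) or it is reassigned to $\arg\max_{j\neq i} v_j$ (contributing $r_i\cdot \max_{j\neq i} v_j$), depending exactly on whether $v_i\le p_i$ and $\max_{j\neq i} v_j\ge p_i$. This is precisely $r_i$ times the expected welfare that $M$ achieves on the bilateral trade instance with seller distribution $F_i$ and buyer distribution $G_{-i}$, whose first-best is $E[\max(v_i,\max_{j\neq i}v_j)] = E[\max_j v_j]$. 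By the $\alpha$-approximation guarantee of $M$, the contribution of share $r_i$ is therefore at least $\alpha\cdot r_i\cdot E[\max_j v_j]$. Summing over $i$ yields total expected welfare at least $\alpha\cdot E[\max_j v_j]=\alpha\cdot OPT$.

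The only subtle point — and the part that one must justify carefully rather than by appealing to a stock lemma — is that running the $n$ sub-auctions in parallel on a shared report profile does not break dominant-strategy incentive compatibility. This hinges on the linearity of valuations: agent $i$'s utility decomposes additively across sub-auctions, so the best response in each sub-auction can be optimized independently, and in both roles (seller of $r_i$ and Vickrey bidder for $r_j$, $j\neq i$) the optimum coincides with reporting the true value $v_i$. Everything else in the proof is a routine decomposition of welfare across shares.
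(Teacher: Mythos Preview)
Your overall architecture matches the paper's: decompose into $n$ parallel sub-auctions, one per share $r_i$, each implemented as a second-price auction among the other agents with a reserve $p_i$ obtained by applying $M$ to the pair $(F_i,\,\text{dist.\ of }\max_{j\neq i}v_j)$. The welfare decomposition and the IR/budget-balance arguments are fine. However, the mechanism as you wrote it is \emph{not} DSIC for the seller, and your ``iff'' is where the argument breaks.

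Take $p_i=2$, true $v_i=3$, and two other agents with values $5$ and $4$, so the second-highest among $\{v_j\}_{j\neq i}$ is $m_2=4$. Reporting truthfully, $3>p_i$, so no trade and $i$'s utility from this share is $3r_i$. Reporting $v_i'=1\le p_i$, trade occurs and $i$ receives $r_i\cdot\max(p_i,m_2)=4r_i>3r_i$; under-reporting strictly helps. The error is in the sentence ``the payment received \ldots\ is at least $p_i\cdot r_i\ge v_i\cdot r_i$, so the seller prefers trading iff $v_i\le p_i$'': from ``payment $\ge p_i r_i$'' you correctly conclude the seller is happy to trade when $v_i\le p_i$, but the converse fails whenever $p_i<v_i\le m_2$, since then the payment $m_2 r_i$ still exceeds $v_i r_i$.

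The fix --- which is exactly what the paper does --- is to make the seller's trade threshold $p_i^{*}=\max(p_i,m_2)$ rather than $p_i$: trade if and only if $v_i\le p_i^{*}$ and the top buyer's value is at least $p_i^{*}$, at price $p_i^{*}$. Now the seller's received payment equals the threshold, so the seller prefers trade precisely when $v_i\le p_i^{*}$ and truthful reporting is dominant. This modification only \emph{adds} trades (the cases $p_i<v_i\le m_2$), and in every added trade the share moves to an agent with value at least $m_2\ge v_i$, so welfare can only increase and your per-share lower bound of $\alpha\cdot r_i\cdot E[\max_j v_j]$ still holds.
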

\begin{proof}
In the proof we use the already-mentioned fact that any truthful mechanism for bilateral trade simply sets a trade price $p$ that does not depend on the values of the bidders. We develop our mechanism for partnership dissolving in two stages.

\vspace{0.1in} \noindent \textbf{First Stage: A Mechanism where only Bidder $i$ may sell.} We will ``run'' $M$ with bidder $i$ and a hypothetical buyer whose value is distributed according to the distribution of $\max_{k\neq i} v_k$. Let $p$ be the price that $M$ posts. Let $M'$ be the following mechanism:
\begin{itemize}
\item Let $j \in \arg\max_{k\neq i} \{v_k\}$ and let $m_2$ be the second highest value of $v_1,\ldots, v_{i-1},v_{i+1},\ldots, v_n$, that is,
$
    m_2=\max_{k \in N\setminus \{j,i\}} v_{k}.
$
\item Let $p^*=\max\{ p,m_2 \}$. If $v_i\leq p^*$ and $v_j \geq p^*$ then bidder $j$ pays to bidder $i$ the amount of $p^*$ and receives the item. Otherwise bidder $i$ keeps his item and does not get paid.
\end{itemize}

To see that the mechanism is truthful, observe that if a sale is made neither the winning buyer nor bidder $i$ cannot affect the price by changing their bid. A losing buyer $k$ can only turn into a winner by overbidding the winning buyer and paying $p^* \geq v_j\geq v_k$, and therefore cannot gain a positive payoff. If the item is not sold, it is either because the seller's value exceeds $p^*$ (and winning by underbidding induces a payment below $v_i$) or all of the buyers' values are below $p^*$ (and again, overbidding results in a payment higher than $v_j$).

Now for the approximation ratio. Notice that whenever there is a trade in $M$ there is a trade in $M'$ (but the opposite is not true; for example, a trade occurs when $p<v_i\leq m_2$).
$M$ achieves an $\alpha$-approximation to the optimal solution (that may only allocate bidder $i$'s share) which equals $\max\{v_i,\max\{v_{-i}\}\}=\max_{k}\{v_k\}$. $M'$ achieves at least the same expected welfare as $M$, thus it is an $\alpha$-approximation to the optimal welfare (that may only allocate bidder $i$'s share) as well.

The mechanism is individually rational since we always have that $v_i \leq  p^* \leq v_j$. In addition, payments are transferred from one player to another, hence the mechanism is budget balanced.

\vspace{0.1in} \noindent \textbf{Second Stage: The Final Mechanism.} At an arbitrary order, use $M'$ to sell to the other bidders the endowment $r_i$ of each bidder $i$ as a single indivisible item.

We now analyze the approximation ratio. Let $v_{max}=\max_kv_k$ be the highest value. By selling the endowment of bidder $i$ the expected social welfare is at least $\frac{r_i v_{max}}{\alpha}$. Since the valuations of the bidders are linear, after selling all endowments the expected social welfare of at least $\frac{v_{max}}{\alpha}$.

The truthfulness of the mechanism also follows from the linearity of the valuations of the bidders: at every stage they will maximize their payoff from the item independently of the other sales. Therefore, truthfulness follows from the truthfulness of $M'$. Similarly, the mechanism is individually rational and budget balanced.
\end{proof}

As our best approximation for the bilateral-trade problem is $1-\frac{1}{e}$, the reduction guarantees the same approximation ratio for Partnership Dissolving.

\begin{corollary}
There is a DSIC, individually rational, and budget balanced mechanism the Partnership Dissolving problem which is a $1-\frac{1}{e}$ approximation to the optimal effciency.
\end{corollary}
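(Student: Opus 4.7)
The plan is to simply compose two results already established in the paper: the Random-Quantile mechanism for bilateral trade and the black-box reduction of Theorem \ref{thm:reduction-partnership}. Concretely, I would take $M$ in the statement of Theorem \ref{thm:reduction-partnership} to be the Random-Quantile mechanism (call its approximation ratio $\alpha = 1 - 1/e$), and then apply the reduction verbatim to obtain the required mechanism for Partnership Dissolving.

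The first step is to verify that the Random-Quantile mechanism satisfies all the hypotheses needed for the reduction. The reduction requires $M$ to be DSIC, individually rational, and budget balanced with approximation ratio $\alpha$; the theorem establishing the Random-Quantile mechanism gives exactly these properties. The only subtle point is that the reduction ``runs'' $M$ with a hypothetical buyer whose value is distributed as $\max_{k\neq i} v_k$; the Random-Quantile mechanism needs only the seller's distribution in order to pick its price distribution, so it does not actually require explicit knowledge of this derived buyer distribution, and its $(1-1/e)$ guarantee holds for \emph{every} buyer distribution (including the distribution of $\max_{k\neq i} v_k$). Thus we can legitimately plug it in as the black box.

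Next I would invoke Theorem \ref{thm:reduction-partnership} directly with this choice of $M$ and $\alpha = 1-1/e$. The theorem preserves DSIC, individual rationality, and budget balance, and it yields a mechanism whose expected welfare is at least an $\alpha$ fraction of the optimal welfare for the Partnership Dissolving instance. Substituting $\alpha = 1-1/e$ yields the claimed bound.

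There is essentially no main obstacle here, since the corollary is a direct composition of previous results. If anything, the only thing to double-check is that the reduction's ``linearity of valuations'' argument in the second stage (where shares $r_i$ are sold sequentially) interacts correctly with the randomized nature of the Random-Quantile mechanism; but since randomness in $M$ is internal and does not depend on bidder reports, each invocation of $M'$ remains DSIC, IR, and budget balanced in the ex-post sense, and the welfare guarantee carries through in expectation just as in the deterministic case.
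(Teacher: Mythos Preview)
Your proposal is correct and is exactly the paper's approach: the corollary is obtained by instantiating the black-box reduction of Theorem~\ref{thm:reduction-partnership} with the $(1-1/e)$-approximate bilateral-trade mechanism. Your extra remarks (that the Random-Quantile price depends only on the seller's distribution, and that the reduction goes through unchanged when the posted price is randomized) are helpful sanity checks but are not needed beyond what the paper already assumes.
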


\subsection{2-player Markets with Monotone Valuations}
\label{subsec:reductions:monotone}

We now consider a 2-player market
where a seller initially owns the whole quantity of a divisible good.
The seller and the buyer may have
arbitrary monotone valuations, that is, the only restriction on the valuations is that for fractions $x>y$ we have $v(x)\geq v(y)$.
In this setting, we show that given an $\alpha$-approximation mechanism to the bilateral trade problem, we can design a mechanism with an approximation guarantee of $\frac{\alpha}{1+\alpha}$ that preserves the economic properties of the original mechanism.


\begin{theorem}
Let $A$ be a budget balanced and DSIC mechanism for bilateral trade with one indivisible good that achieves an approximation ratio of $\alpha< 1$. There is a DSIC and budget balanced mechanism $A'$ for bilateral trade with one divisible good that guarantees an approximation ratio of $\frac{\alpha}{\alpha+1}$, as long as the players' valuations are monotonically increasing.
\end{theorem}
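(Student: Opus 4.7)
The plan is to reduce the divisible problem to the indivisible one by running $A$ as a black box on the ``whole-item'' valuations $v_s(1)$ and $v_b(1)$, and to randomize this against a ``seller keeps everything'' fallback. The key monotonicity observation is that for any fraction $x \in [0,1]$ given to the buyer, $v_s(1-x) + v_b(x) \leq v_s(1) + v_b(1)$, hence $OPT \leq E[v_s(1)+v_b(1)]$. It therefore suffices to extract an $\alpha/(1+\alpha)$-fraction of this simple sum upper bound.

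The mechanism $A'$ works as follows. With probability $\tfrac{\alpha}{1+\alpha}$ the seller keeps the entire good and no payment is made. With the remaining probability $\tfrac{1}{1+\alpha}$ we simulate $A$ on the one-shot valuations $v_s(1), v_b(1)$, and if $A$ declares trade at its posted price $p$ we transfer the full item to the buyer for price $p$; otherwise the seller keeps the good. Since in either branch the outcome is a single posted-price, whole-item trade and payments are transferred only between the two players, the DSIC, ex-post individual rationality, and budget-balance properties of $A$ lift immediately to $A'$: each player faces a take-it-or-leave-it trade of the entire good at a single price and their dominant strategy is simply to accept iff weakly preferable to not trading.

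For the approximation, let $S = E[v_s(1)]$, $B = E[v_b(1)]$, and $M = E[\max(v_s(1), v_b(1))]$. Writing $W_{A'}$ and $W_A$ for the (random) welfare of $A'$ and of the inner $A$ respectively,
\[
E[W_{A'}] \;=\; \tfrac{\alpha}{1+\alpha}\,S \;+\; \tfrac{1}{1+\alpha}\,E[W_A] \;\geq\; \tfrac{\alpha}{1+\alpha}(S+M) \;\geq\; \tfrac{\alpha}{1+\alpha}(S+B) \;\geq\; \tfrac{\alpha}{1+\alpha}\,OPT,
\]
using in succession the guarantee $E[W_A]\geq\alpha M$ of $A$ on the induced indivisible instance, the pointwise inequality $\max(v_s(1),v_b(1))\geq v_b(1)$, and the monotonicity bound $OPT \leq S+B$.

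The main conceptual point, and the only real obstacle, is the choice of mixing weights. Running $A$ alone on $(v_s(1), v_b(1))$ yields only $\alpha\,M$, which can be as small as $\tfrac{\alpha}{2}\,OPT$ in the worst case (e.g.\ when $v_s(1) \approx v_b(1)$ but the monotone $OPT$ approaches $v_s(1)+v_b(1)$). Padding with a ``keep with seller'' branch restores the $v_s(1)$ mass that $A$ alone may give up when it trades, and the weights $\alpha:1$ are exactly those that equalize the two branches so that the $\alpha/(1+\alpha)$ guarantee holds regardless of which of $v_s(1)$ or $v_b(1)$ is larger.
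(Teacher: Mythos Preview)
Your proof is correct, but it takes a different route from the paper. The paper's mechanism $A'$ is simply: run $A$ deterministically on the whole-item values $(v_s(1),v_b(1))$. Its analysis is a two-case argument based on the split $OPT = S_{\mathrm{opt}} + B_{\mathrm{opt}}$, where $S_{\mathrm{opt}},B_{\mathrm{opt}}$ are the seller's and buyer's expected contributions in the optimal divisible allocation: if $S_{\mathrm{opt}} \ge \tfrac{\alpha}{1+\alpha}OPT$ use that the welfare of $A$ is always at least $v_s(1) \ge$ (seller's fractional value), and otherwise use the $\alpha$-guarantee of $A$ together with $E[\max] \ge E[v_b(1)] \ge B_{\mathrm{opt}}$.

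Your approach instead upper-bounds $OPT$ by the coarser $S+B = E[v_s(1)]+E[v_b(1)]$ and then mixes ``seller keeps'' with ``run $A$'' at weights $\alpha:1$. The inequality chain is clean and the DSIC/IR/budget-balance lifting is correctly argued via universal truthfulness. One remark: your closing paragraph overstates the need for randomization. Because $A$ is IR, trade only occurs when $v_b(1)\ge v_s(1)$, so running $A$ alone already yields welfare $\ge v_s(1)$ pointwise; thus $E[W_A] \ge \max(S,\alpha M) \ge \max(S,\alpha B)$, and a direct check shows $\max(S,\alpha B) \ge \tfrac{\alpha}{1+\alpha}(S+B)$. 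So the deterministic mechanism the paper uses already meets the bound, and your randomization, while harmless, is not actually required. Either way the argument goes through; your version trades a slightly weaker mechanism for a tidier one-line welfare computation.
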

\begin{proof}
Let $S$ be the expected contribution of the seller to the welfare-maximizing solution, and let $B$ be the expected contribution of the buyer $B$. Thus, if we let $OPT$ denote the expected optimal welfare, we have that $OPT=B+S$.

Consider some instance and let $s$ be the value of the seller for getting all the item and similarly let $b$ be the value of the buyer for getting all the item. Let $D_s$ be the distribution of $s$ given the seller's distribution and $D_b$ the distribution of $b$. Our mechanism $A'$ will simply run $A$ to obtain a trade price $p$. If $b\geq p\geq s$ the item is re-allocated in full to the buyer and the buyer transfers a payment of $p$ to the seller. Otherwise the seller keeps the full item. Notice that $A'$ inherits from $A$ its truthfulness, budget balance, and individual rationality.

We divide the analysis of the approximation ratio into two cases. In the first case we assume that $S\geq \frac {\alpha} {\alpha+1} OPT$. Observe that in this case the approximation ratio is $\frac{\alpha}{\alpha+1}$: the expected welfare of any mechanism is at at least $E[s]$ since every reallocation cannot decrease the welfare. To conclude this case, observe that clearly we have that by the monotonicity of the valuations, $E[s]\geq S$.

Thus assume that $S<\frac {\alpha} {\alpha+1}OPT$. In other words, $B\geq \frac {OPT} {\alpha+1}$. Notice that if we restrict our attention to instances where the item can only be reallocated in full, then the optimal welfare is at least $E[b]\geq B$ (again, using the monotonicity of the valuations). Thus, using the approximation guarantee of $A$, the expected welfare of $A'$ is at least $\alpha B \geq \frac {\alpha} {\alpha+1} OPT$, as needed.
\end{proof}

\subsection{Trading a Divisible Good with Convex Valuations}
\label{subsec:reductions:convex}

We consider a 2-player market, where each player owns some arbitrary share of a fully divisible good. The initial shares of players $1,2$ are denoted by $r_1,r_2$, where $r_1+r_2=1$.
Each player $i$ has a valuation function $v_i:[0,1]\rightarrow \mathbb R$, and for every $x,y$ we define the marginal valuation $v_i(x|y)=v_i(x+y)-v_i(y)$. We assume that the valuation functions are normalized ($v_i(0)=0$), non decreasing, and have decreasing marginal valuations (i.e., $v_i(\epsilon|x)\geq v_i(\epsilon|y)$ for every $\epsilon>0,y>x$).\footnote{
When $v_i(\cdot)$ is twice differentiable, we simply assume that $v_i^{''}(x) \leq 0$ for every $x$.
}

In this model, we show that given any DSIC and budget balanced $\alpha$-approximation mechanism, we can use it to create an
 $\frac{\alpha}{\alpha+1}$-approximation mechanism for the 2-player market with convex valuations.


\begin{theorem}
Let $A$ be a DSIC and budget balanced mechanism for bilateral trade that achieves an $\alpha < 1$ approximation to the optimal efficiency. There is a DSIC and budget balanced mechanism for the 2-player reallocation problem with arbitrary endowments when the players' valuations are monotone and have decreasing marginals that guarantees an approximation ratio of $\frac{\alpha}{\alpha+1}$.
\end{theorem}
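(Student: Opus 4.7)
My plan is to extend the monotone-divisible reduction of the previous subsection by running $A$ exactly once on a carefully chosen bilateral trade instance, where the direction of the probed trade is dictated by the (publicly observable) endowments. The main new subtlety is that gains from trade may point in either direction, yet a single invocation of $A$ can only probe one; I will exploit concavity to argue that when $A$ is pointed ``from the majority owner to the minority owner,'' the status quo already absorbs the loss on the realizations where trade should have gone the other way.

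\textbf{Mechanism.} By relabeling players if necessary, I will assume $r_1 \geq r_2$, so that $r_1 \geq 1/2$. Set $s := v_1(r_1)$ and $b := v_2(1) - v_2(r_2)$, with distributions induced by those of $v_1, v_2$, and run $A$ on this bilateral-trade instance to obtain a posted price $p$. If $s \leq p \leq b$, player $1$ transfers her endowment $r_1$ to player $2$ and receives payment $p$ from player $2$; otherwise the status quo is kept. DSIC, individual rationality, and budget balance will be inherited from $A$ exactly as in the previous reduction: each player's only decision is accept-or-reject of $p$, and the payment is transferred between the two players.

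\textbf{Approximation analysis.} Let $Q := v_1(r_1) + v_2(r_2)$, $U := E[Q]$, and $V := E[\max(Q, v_2(1))]$. The no-trade fallback of $A$ (whose bilateral-trade welfare is always at least $s$) gives $W \geq U$, while the identity $\max(s,b) + v_2(r_2) = \max(Q, v_2(1))$ combined with the $\alpha$-approximation guarantee of $A$ gives $W \geq \alpha V + (1-\alpha) E[v_2(r_2)] \geq \alpha V$. For the upper bound on $OPT$, I fix a realization and let $(x_1^*, x_2^*)$ attain it. If $x_1^* \geq r_1$, then concavity together with $r_1 \geq 1/2$ yields $v_1(x_1^*) \leq v_1(1) \leq 2 v_1(r_1)$ while $v_2(x_2^*) \leq v_2(r_2)$, so the realized optimum is at most $2Q$; if $x_1^* < r_1$, then $v_1(x_1^*) \leq v_1(r_1)$ and $v_2(x_2^*) \leq v_2(1)$, so the realized optimum is at most $v_1(r_1) + v_2(1) \leq Q + v_2(1)$. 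Either way the realized optimum is at most $Q + \max(Q, v_2(1))$, so $OPT \leq U + V$. A one-line case analysis on whether $U \geq \alpha V$ yields $\max(U, \alpha V) \geq \frac{\alpha}{\alpha+1}(U + V)$, giving $W \geq \max(U, \alpha V) \geq \frac{\alpha}{\alpha+1} OPT$.

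The main obstacle is the asymmetry of the reduction---only one direction of trade is probed---but this is resolved by the concavity bound $v_1(1) \leq 2 v_1(r_1)$, which holds precisely because WLOG $r_1 \geq 1/2$ and which forces $Q$ itself to be within a factor of two of $OPT$ in the ``wrong direction'' realizations; since $1/2 \geq \alpha/(\alpha+1)$ whenever $\alpha \leq 1$, the bound $W \geq U$ alone already suffices in those realizations, and otherwise the $W \geq \alpha V$ bound takes over.
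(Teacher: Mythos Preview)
Your argument is correct, and it is genuinely different from the paper's proof. The paper does \emph{not} keep the endowments: it fixes $x=\alpha/(\alpha+1)$, hands the majority player an $x$-fraction outright (possibly taking some of it from the minority player), and then runs $A$ on the remaining $1-x$ chunk as an indivisible item with seller value $v_1(1)-v_1(x)$ and buyer value $v_2(1-x)$. Concavity is used twice---once to get $E[v_1(x)]\ge xS$ and once to get $OPT_2\ge(1-x)B$---against the crude upper bound $OPT\le S+B$ with $S=E[v_1(1)]$, $B=E[v_2(1)]$; the choice $x=\alpha/(\alpha+1)$ then balances the two terms. Your construction instead leaves the endowments in place and lets $A$ decide whether the majority share $r_1$ moves in full; concavity enters only through $v_1(1)\le 2v_1(r_1)$ (which needs $r_1\ge 1/2$), and your sharper upper bound $OPT\le U+V$ replaces $S+B$. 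The paper's route is a bit shorter (no case split on the direction of the optimal reallocation), but your mechanism has a real advantage: it is ex-post individually rational, whereas the paper's mechanism can strip the minority owner of her entire share without compensation, which is why the theorem as stated omits IR.
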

\begin{proof}
Consider the following mechanism:
give the seller for free a fraction $x$ of the item ($x \leq 1$ will be specified later, one of the players that has a share of at least $x$ will be viewed as the seller and we will claim that there is such a player). Denote by $D_s$ the marginal distribution that specifies the value of the seller for receiving an additional $y\in [0,1-x]$ fraction of the item and by $D_b$ the distribution that specifies the value of the buyer for receiving $y\in[0,1-x]$ of the item. Let $i$ be the player that received the item when running the mechanism $A$ on the distributions $D_s$ and $D_b$, and let $p_s\geq 0$ be the payment that the seller receives and $p_b$ the amount that the buyer is charged. Our mechanism give player $i$ a fraction of $1-x$ of the item, additionally allocates the seller $x$, charges the buyer $p_b$ and gives the seller a payment $p_s$. The mechanism is clearly truthful and budget balanced. All that is left is to analyze its approximation ratio.

Let $S$ be the expected value of the seller for receiving all the item, and similarly let $B$ denote the expected value of the buyer for receiving all the item. Clearly, $S+B$ is an upper bound on the expected value of the optimal solution. We will show that for $x=\alpha/(\alpha+1)$ the expected welfare of our mechanism is $x\cdot (S+B)$ and the claim regarding the approximation ratio will follow.

We first bound the contribution of the seller for receiving a fraction $x$ of the item in the first step: Since the valuation of the seller exhibits decreasing marginals the expected contribution is at least $x\cdot S$. We now bound the contribution of the second step. Ideally, we could have calculated the expected optimal value $OPT_2$ of the secondary problem (with the distributions $D_s$ and $D_b$) and claim that the when running the mechanism $A$ the expected welfare is at least $ \alpha\cdot OPT_2$. However, since we do not know how to compute $OPT_2$ exactly we will only give a rough bound to it. Specifically, we claim that $OPT_2\geq (1-x)\cdot B$, which is the welfare we when always allocating the buyer a fraction of $(1-x)$ of the item, since the buyer's valuation function exhibits decreasing marginals.

We get that the overall expected welfare of our mechanism is $x\cdot S+ \alpha (1-x) \cdot B$. Thus, for $x=\frac {\alpha} {\alpha+1}$ the expected welfare is $\frac {\alpha S} {1+\alpha}+\frac {\alpha B} {1+\alpha}$, which gives us an $\frac{\alpha}{\alpha+1}$ approximation ratio.
As there is always a player with at least $\alpha /(\alpha+1)$ fraction of the good (this value is at most 1/2), this player will be the seller.

\end{proof}

\section{Discussion}
\label{sec:discussion}

In this paper, we constructed mechanisms that approximate efficiency in bilateral-trade settings, and showed how such mechanisms can be used as building blocks of mechanisms in more complex scenarios. Our main result states that there always exists a simple, fixed-price, dominant-strategy incentive compatible, strongly budget balanced and ex-post individually rational mechanism that achieves at least $1-1/e$ of the optimal efficiency. This mechanism computes a price as a function of the distribution functions of the seller.

We then show how this implies a $0.39$ approximation for a similar setting with a divisible good and arbitrary monotone valuations, or a 2-player exchange setting for general convex valuations.
We also prove the same $1-1/e$ approximation factor for the $n$ player partnership-dissolving problem. These three approximation results are given as ``black-box" reductions, such that if the $1-1/e$ bound is improved in the future (we know that it cannot be larger than $0.749$ \cite{CKLT16}), these bounds will be immediately improved as well.

A recent working paper \cite{CLMM15} discussed a monopoly pricing problem, where the seller have a limited knowledge on the distribution of the buyer's value and the seller aims to maximize his worst-case profit. While no approximation results are given in \cite{CLMM15}, the structure of the optimal selling mechanism has similarities to our main mechanism (random prices from a distribution with a logarithmic shape). Future work should study whether this new approach can help proving the optimality of such mechanisms in efficiency maximizing scenarios like ours. Another very recent paper is \cite{CS16} that studies equilibria in a game where a buyer chooses a distribution and the seller posts a price after observing this distribution. In equilibrium the seller posts a single deterministic price but the buyer chooses a distribution over the range $[\frac{1}{e},1]$, which again, may indicate that our main mechanism is optimal in some sense.



\subsubsection*{Acknowledgments}

We thank Akaki Mamageishvili for pointing out a mistake in a previous draft.

\bibliographystyle{plain}
\bibliography{partnership}

\appendix

\section{Deterministic 1/2 Approximation: Tightness}

\label{app:subsec:median-is-tight}

\subsubsection*{Proof of Proposition \ref{prop-bilateral-distribution-info}}


We start with the first part. Consider the following distribution $D_s$ of the seller: with probability $\frac 1 2$, $v_s$ gets a value (uniformly at random) in $(0,\epsilon)$. With probability $\frac 1 2$, $v_s$ gets a value (uniformly at random) between $(1,1+\epsilon)$. Observe that the median of $D_s$ is $1$.

Recall that every DSIC, individually rational and budget balanced mechanism in our setting is a posted price mechanism. Now there are two possible cases, depending on the trade price $r$:
\begin{enumerate}
\item $r \leq 1$: let $v_b=\infty$ with probability $1$. The optimal solution always sells the item to the buyer, but the mechanism will sell the item with probability $\frac 1 2$. We get an approximation of $2$ since $v_b >> v_s$.

\item $r > 1$: let $v_b=1-\epsilon$ with probability $1$. The value of the optimal solution is at least $1-\epsilon$ (always sell the item to the buyer). However, the mechanism sells the item only when $v_s \in (0,\epsilon)$, which happens with probability $\frac 1 2$. The approximation is $2$ also in this case.
\end{enumerate}

We now prove the second part. Consider the following distribution $D_b$ of the buyer: with probability $0.99$, $v_b$ gets a value (uniformly at random) in $(1,1+\epsilon)$. With probability $0.01$, $v_b$ gets a value (uniformly at random) between $(100,100+\epsilon)$.

There are several possible cases, depending on the trade price $r$.
\begin{enumerate}
\item $r \leq 1+\epsilon$: let $v_s=1+\epsilon$ with probability $1$. The optimal solution sells the item to the buyer with probability $\frac 1 100$ and the expected welfare is about $2$. The mechanism that post the price $r$ however will never sell the item and will generate an expected welfare of $1+\epsilon$.

\item $1+\epsilon< r \leq 100+\epsilon$: let $v_s=0$ with probability $1$. The expected value of the optimal solution is about $2$ (always sell the item to the buyer). However, the mechanism sells the item only when $v_b \in (100,100+\epsilon)$, so the expected welfare is about $1$. The approximation is $2$ also in this case.

\item $r>100+\epsilon$: let $v_s=0$. The expected value of the optimal solution is $2$, but the mechanism achieves welfare of $0$.
\end{enumerate}

%
%
%
%

\section{Approximating the GFT}

We first show that the GFT cannot be approximated by a budget balanced, individually rational and incentive compatible mechanism to within any constant factor. Notice that this result is proved via examples in which the optimal GFT is a negligible fraction of the social welfare, demonstrating the fact that when the expected GFT is a small fraction of the overall efficiency then approximating it may not be very informative. We also make a straightforward observation that if the fraction of the optimal GFT out of the optimal overall efficiency is large, then an approximation to the optimal efficiency implies a good approximation ratio to the GFT as well (Appendix \ref{app:gft-reduction}).

\subsection{An Impossibility Result for Approximating the GFT}
\label{app:gft-imposs}

Consider a buyer and a seller with values on the support $[0,...,t]$, and
let $\lambda=\frac{1}{1-e^{-t}}$.
Let $F_b(x)=\lambda(1-e^{-x})$ with $f_b(x)=\lambda e^{-x}$ and $F_s(x)=\lambda (e^{x-t}-e^{-t})$ with $f_s(x)=\lambda e^{x-t}$.

\begin{proposition}
For the above distributions, every fixed price mechanism achieves at most $O(1/t)$ approximation to the optimal gain from trade.
\end{proposition}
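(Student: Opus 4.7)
The plan is to separately bound the expected gain from trade of any fixed-price mechanism from above, and the optimal gain from trade from below, showing that the first is $O(e^{-t})$ while the second is $\Omega(t\cdot e^{-t})$. The ratio then yields the claimed $O(1/t)$ bound uniformly over all price choices. Throughout I will use that $\lambda = 1/(1-e^{-t})\in[1,2]$ for $t\ge\ln 2$, so $\lambda$ is $\Theta(1)$.

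First I would analyze a fixed-price mechanism at an arbitrary price $p\in[0,t]$. The probability of trade factors as $\Pr[v_s\le p]\cdot\Pr[v_b\ge p]$. Directly from the CDFs one has $\Pr[v_s\le p] = \lambda(e^{p-t}-e^{-t})\le \lambda e^{p-t}$ and $\Pr[v_b\ge p] = \lambda(e^{-p}-e^{-t})\le \lambda e^{-p}$, whose product is at most $\lambda^2 e^{-t} = O(e^{-t})$. Next I would bound the conditional gap $E[v_b\mid v_b\ge p] - E[v_s\mid v_s\le p]$ by an absolute constant. A short integration by parts gives $E[v_b\mid v_b\ge p] = p+1 - \tfrac{(t-p)e^{-t}}{e^{-p}-e^{-t}} \le p+1$, and similarly $E[v_s\mid v_s\le p] = p - 1 + \tfrac{p}{e^p-1} \ge p - 1$. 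Subtracting, the gap is at most $2$. Multiplying by the trade probability shows that the expected gain from trade of any fixed price is at most $2\lambda^2 e^{-t} = O(e^{-t})$.

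Second I would compute (or, more precisely, lower-bound) the optimal gain from trade, namely $\int_0^t\!\!\int_s^t (b-s) f_b(b)f_s(s)\,db\,ds$. Carrying out the inner $b$-integral by parts gives $\int_s^t(b-s)e^{-b}\,db = e^{-s} + (s-t-1)e^{-t}$. Plugging this in and using $f_s(s)f_b(b)=\lambda^2 e^{-b}e^{s-t}$, the outer integral reduces to $\lambda^2 e^{-t}\!\left[\,t + \int_0^t(s-t-1)e^{s-t}\,ds\right]$. A substitution $u=s-t$ evaluates the remaining integral to $-2+(t+2)e^{-t}$, so the optimal gain from trade equals $\lambda^2 e^{-t}\bigl(t-2+(t+2)e^{-t}\bigr)$. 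For $t\ge 3$ this is at least $\tfrac{t}{3} e^{-t} = \Omega(t e^{-t})$.

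Combining the two bounds, the approximation ratio of any fixed-price mechanism is at most $\frac{O(e^{-t})}{\Omega(t e^{-t})} = O(1/t)$, as claimed. The main technical step is the computation of the optimal gain from trade: the integrations are elementary but require two applications of integration by parts and careful tracking of the $e^{-t}$ terms to ensure that the leading behavior is $t\,e^{-t}$ rather than something smaller. Once that calculation is pinned down, the two matching bounds on trade probability and on the conditional gap on the mechanism side are immediate from the explicit exponential form of $F_s$ and $F_b$.
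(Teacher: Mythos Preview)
Your argument is correct and lands on exactly the same numerical conclusions as the paper: the optimal gain from trade is $\lambda^2 e^{-t}\bigl(t-2+(t+2)e^{-t}\bigr)$, while any fixed price achieves at most a constant multiple of $\lambda^2 e^{-t}$. The computation of the optimal gain from trade is essentially identical to the paper's direct double integration.

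Where you diverge is in bounding $GFT(p)$. The paper evaluates the double integral for $GFT(p)$ exactly, obtaining $\lambda^{2}\bigl(\tfrac{t+2}{e^{2t}}+\tfrac{2}{e^{t}}-\tfrac{p+2}{e^{p+t}}-\tfrac{e^p(t+2-p)}{e^{2t}}\bigr)$, and then drops the two negative terms. You instead exploit independence to factor $GFT(p)=\Pr[v_s\le p]\Pr[v_b\ge p]\cdot\bigl(E[v_b\mid v_b\ge p]-E[v_s\mid v_s\le p]\bigr)$, bound the trade probability by $\lambda^2 e^{-t}$ uniformly in $p$, and bound the conditional gap by $2$. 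This decomposition is more conceptual and avoids the full integration; it also makes transparent \emph{why} no price can do well: the trade probability is pinned at $O(e^{-t})$ regardless of $p$, and the conditional surplus per trade is bounded by a constant because buyer and seller tails are mirror exponentials. The paper's exact formula, on the other hand, gives slightly sharper constants and shows explicitly how the bound depends on $p$.
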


The optimal gain from trade (divided by the normalization factor of the distributions):
\begin{align*}
\frac{1}{\lambda^2} GFT^\cdot  
    = &  \int_{0}^t \int_{0}^v (v-c)e^{c-t}dc \; e^{-v} dv \\
    = &  \int_{0}^t \left[ e^{c-t}(v-c+1) \right]^v_0 \; e^{-v} dv \\
    = & \int_{0}^t \left( e^{v-t}-e^{-t}(v+1) \right) \; e^{-v} dv \\
    = & e^{-t}t-\int_{0}^t e^{-v-t}(v+1) dv \\
    = & e^{-t}t+\left[ e^{-t-v}(v+2) \right]^t_0 dv \\
    = & e^{-t}t + e^{-2t}(t+2)-2e^{-t}\\
     = & \frac{t-2}{e^{t}}+\frac{t+2}{e^{2t}}
\end{align*}

The gain from trade from posting a price $p$:

\begin{align*}
\frac{1}{\lambda^2}  GFT(p)  
    = & \int_{0}^p \int_{p}^t (v-c)e^{-v} dv \; e^{c-t} dc \\
    = & \int_{0}^p \left[ (c-v-1)e^{-v}\right]^t_p \; e^{c-t} dc \\
    = & \int_{0}^p \big( (c-t-1)e^{-t}-(c-p-1)e^{-p} \big) \; e^{c-t} dc \\
    = & \int_{0}^p (c-t-1)e^{c-2t}-(c-p-1)e^{c-t-p} dc \\
    = & \left[ e^{c-2t}(c-t-2)\right]^p_0 - \left[ e^{c-p-t}(c-p-2) \right]^p_0 \\
    = & e^{p-2t}(p-t-2) - e^{-2t}(-t-2) +2e^{-t} + e^{-p-t}(-p-2) \\
    = & \frac{t+2}{e^{2t}} + \frac{2}{e^{t}} - \frac{p+2}{e^{p+t}} - \frac{e^p(t+2-p)}{e^{2t}} \\
    < & \frac{t+2}{e^{2t}} + \frac{2}{e^{t}}
\end{align*}

Overall, the optimal gain from trade is about $\frac{t-2}{e^{t}}$ which is $O(t)$ more than the gain from trade from any price $p$ which is at most $\frac{2}{e^{t}}$. (Note that the terms that are $O(\frac{t}{e^{2t}})$ are negligible for large $t$'s.)

\subsection{A Simple Reduction from Efficiency to Gains From Trade}
\label{app:gft-reduction}

\begin{observation}
Let $c$ be the fraction of the optimal expected gain-from-trade out of the optimal expected efficiency.
If a mechanism achieves a $x$ fraction of the optimal efficiency, then it achieves at least a $\frac{x+c-1}{c}$ fraction of the optimal gain-from-trade.
\end{observation}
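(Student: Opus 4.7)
The plan is to reduce the statement to the basic identity that welfare equals the seller's value plus gain-from-trade. Let $OPT$ denote the expected optimal efficiency $E[\max\{s,b\}]$ and $GFT^*$ the expected optimal gain-from-trade $E[(b-s)^+]$; by hypothesis $c = GFT^*/OPT$. The first step is to observe the decomposition
\begin{equation*}
OPT \;=\; E[\max\{s,b\}] \;=\; E[s] + E[(b-s)^+] \;=\; E[s] + GFT^*,
\end{equation*}
which immediately rearranges to $E[s] = (1-c)\,OPT$.

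Next, I would write the expected efficiency of the mechanism in the same form. For any IR, budget-balanced posted-price mechanism (which is the class considered in the paper), a trade occurs only when $v_s \leq p \leq v_b$, and when no trade occurs the item stays with the seller. Therefore the welfare realized by the mechanism on every instance equals $s$ when there is no trade and $s + (b-s) = b$ when there is trade, with $b \geq s$ in the latter case. Taking expectations gives
\begin{equation*}
MECH \;=\; E[s] + GFT(MECH),
\end{equation*}
where $GFT(MECH)$ is the expected gain-from-trade produced by the mechanism and is nonnegative.

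The final step is pure algebra. Substituting the two decompositions into the hypothesis $MECH \geq x \cdot OPT$ yields
\begin{equation*}
(1-c)\,OPT + GFT(MECH) \;\geq\; x \cdot OPT,
\end{equation*}
hence $GFT(MECH) \geq (x + c - 1)\,OPT = \frac{x+c-1}{c}\,GFT^*$, which is the claimed bound.

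I do not expect a real obstacle here; the only subtlety is making explicit that the decomposition $MECH = E[s] + GFT(MECH)$ requires the mechanism to never trade at a loss, which is automatic for the ex-post individually rational fixed-price mechanisms considered throughout the paper. If one wanted to state the observation for a broader class, this assumption would have to be included as a hypothesis, but within the context of this paper it is already ensured.
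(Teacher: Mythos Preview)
Your proof is correct and essentially identical to the paper's: both use the decomposition $OPT = E[s] + GFT^*$ to get $E[s] = (1-c)\,OPT$, then the analogous decomposition $MECH = E[s] + GFT(MECH)$, and finish with the same one-line algebra. You are slightly more explicit than the paper in noting that $MECH = E[s] + GFT(MECH)$ relies on the mechanism never trading when $s>b$ (guaranteed by ex-post IR), which the paper leaves implicit.
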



\begin{proof}
We know that $\frac{OPTGFT}{OPT}=c$, where $OPTGFT$ is the optimal expected gain from trade (in the first-best outcome) and $OPT$ is the optimal efficiency.

$OPT=OPTGFT+E[s]$, where $E[s]$ is the expected value of the seller.  Thus, $\frac{E[s]}{OPT} = 1-c$.

Now, the approximation we get for the GFT is ($MECH$ is the expected efficiency of the $x$-approximation mechanism):
\begin{align*}
\frac{MECH-E[s]}{OPT-E[s]}=\frac{\frac{MECH}{OPT}-\frac{E[s]}{    OPT}}{1-\frac{E[s]}{OPT}} \geq \frac{x-(1-c)}{c}
\end{align*}

\end{proof}

For example, for the uniform distribution we have $OPTGFT=1/6, OPT=2/3$ and then $c=1/4$. If we have a mechanism that gains us 90 percent of the welfare, we know it gains at least $(0.9+0.25-1)/0.25=0.6$ of the GFT.

\end{document}